\title{\LARGE \bf
Integrated Control and Active Perception in POMDPs for Temporal Logic Tasks and Information Acquisition
}
\definecolor{light-gray}{gray}{0.95}
\definecolor{orange}{rgb}{1,0.5,0}
\definecolor{darkgreen}{rgb}{0,204,0}
\author{Chongyang Shi$^1$, Michael R. Dorothy$^2$, and Jie Fu$^1$
\thanks{1. Chongyang Shi, Jie Fu with the Department of Electrical and Computer Engineering, University of Florida, Gainesville, FL, USA. {\tt\small \{c.shi, fujie\}@ufl.edu }.}%
\thanks{2. Michael R. Dorothy with DEVCOM Army Research Laboratory, Adelphi, MD, USA. {\tt\small michael.r.dorothy.civ@army.mil}.}%
}
\newif\ifuseboldmathops
\newif\ifuseittextabbrevs
\DeclareMathOperator*{\optmin}{\textrm{minimize}}
	\newcommand{\eg}{{\it e.g.}}
	\newcommand{\ie}{{\it i.e.}}
	\newcommand{\eg}{e.g.}
	\newcommand{\ie}{i.e.~}
	\newcommand{\reals}{\mathbf{R}}
	\newcommand{\reals}{\mathbb{R}}
\newcommand{\obs}{E}
\newcommand{\dist}[1]{\mathcal{D}(#1)}
\newtheorem{definition}{Definition}
\newtheorem{proposition}{Proposition}
\newtheorem{problem}{Problem}
\newtheorem{remark}{Remark}
\newtheorem{lemma}{Lemma}
\newcommand{\calA}{\mathcal{A}}
\newcommand{\calAP}{\mathcal{AP}}
\newcommand{\calS}{\mathcal{S}}
\newcommand{\calM}{\mathcal{M}}
\newcommand{\calO}{\mathcal{O}}
\newcommand{\calF}{\mathcal{F}}
\newcommand{\expect}{\mathbb{E}}
\newcommand{\nat}{\mathbf{N}}
\newcommand{\dfa}{\mathbb{A}}
\newcommand{\truev}{\mathsf{true}}
\newcommand{\Always}{\Box \, }
\newcommand{\Eventually}{\lozenge \, }
\newcommand{\Next}{\bigcirc \, }
\newcommand{\Until}{\mbox{$\, {\sf U}\,$}}
\newcommand{\init}{{\iota}}
\acrodef{mdp}[MDP]{Markov decision process}
\acrodef{pomdp}[POMDP]{partially observable Markov decision process}
\acrodef{lmdp}[LMDP]{Labeled Markov decision process}
\acrodef{lpomdp}[labeled-POMDP]{Labeled partially observable Markov decision process}
\acrodef{asw}[ASW]{Almost-Sure Winning}
\acrodef{ltlf}[LTL$_f$]{Linear Temporal Logic over Finite Traces}
\acrodef{ltl}[LTL]{linear temporal logic}
\acrodef{scltl}[co-safe LTL]{syntactically co-safe Linear Temporal Logic}
\acrodef{dfa}[DFA]{deterministic finite automaton}
\acrodef{des}[DES]{discrete event system}
\acrodef{hmm}[HMM]{hidden Markov model}
\newcommand{\matO}{{\mathbf{O}}}
\newcommand{\matT}{{\mathbf{T}}}
\newcommand{\matA}{{\mathbf{A}}}
\begin{document}

\maketitle
\thispagestyle{empty}
\pagestyle{empty}

\begin{abstract}This paper studies the synthesis of a joint control and active perception policy for a stochastic system modeled as a partially observable Markov decision process (POMDP), subject to temporal logic specifications. The POMDP actions influence both system dynamics (control) and the emission function (perception). Beyond task completion, the planner seeks to maximize information gain about certain temporal events (the secret) through coordinated perception and control.
To enable active information acquisition, we introduce minimizing the Shannon conditional entropy of the secret as a planning objective, alongside maximizing the probability of satisfying the temporal logic formula within a finite horizon. Using a variant of observable operators in hidden Markov models (HMMs) and POMDPs, we establish key properties of the conditional entropy gradient with respect to policy parameters. These properties facilitate efficient policy gradient computation. We validate our approach through graph-based examples, inspired by common security applications with UAV surveillance.
\end{abstract}

\section{INTRODUCTION}
In many mission-critical systems, control objectives often integrate both reward-based and information-theoretic performance metrics. 
Consider an autonomous robot tasked with patrolling a large building to ensure security. The robot must follow a predefined sequence of waypoints, and upon detecting an intrusion, it must perform a reactive mission, such as intercepting the intruder.
Such a task can be described in temporal logic \cite{mannaTemporalLogicReactive1992a}. While it is natural to formulate a planning objective to maximize the probability of satisfying the temporal logic. One may be interested in knowing whether an intrusion occurred in the past. This planning problem is challenging given the imperfect information present in the environment (e.g., limited visibility, sensor noise, and other environmental factors).   Lacking access to the underlying state trajectory, the agent can be uncertain about the sequence of temporal events that have occurred, making it uncertain regarding if a temporal logic formula is satisfied or certain critical temporal events have occurred.

Motivated by these applications, this paper investigates a class of \ac{pomdp}s with \emph{both controllable transition and controllable observation functions}. The planning objective is twofold: first, the agent aims to maximize the probability of satisfying a task expressed in linear temporal logic over finite traces (LTLf); second, the agent seeks to maximize the information gained during a finite horizon regarding the occurrence of critical temporal events. The model explicitly allows actions to actively elicit information in stochastic environments to minimize uncertainty in critical events assessment. Specifically, we introduce an information-theoretic performance metric, measured by the remaining uncertainty regarding the critical event estimation given observations. The objective is to design a \emph{joint control and active perception policy in POMDPs} that maximizes the probability of task completion while minimizing the uncertainty in inferring the occurrence of a critical event. This critical event can be a subformula of the original temporal logic formula, for example, whether the intrusion occurred in the past, regardless if the robot interdicted the intruder or not.
 
\paragraph*{Related work} 
 Planning and stochastic control  with temporal logic objectives has been studied extensively, including 
belief-based planning \cite{leahyControlBeliefSpace2019}, point-based methods \cite{bouton2020point}, and finite-state-controller synthesis  
\cite{sharanFiniteStateControl2014,kalagarlaOptimalControlPartially2022, carrCounterexampleGuidedStrategyImprovement2019}. A typical approach is to construct a product MDP between the original \ac{mdp} and the automata representation for the temporal logic formula, and then extend existing POMDPs solutions to solve for the temporal logic constrained planning in the product \ac{pomdp}. Aside from traditional POMDP methods, in \cite{andriushchenkoInductiveSynthesisFinitestate2022}, the authors presented
a counter-example-guided inductive synthesis to directly explore a design space that compactly represents potential finite-state controllers for POMDPs subject to temporal logic constraints. Recent methods also employ deep reinforcement learning for POMDP with temporal logic constraints. In \cite{carrCounterexampleGuidedStrategyImprovement2019a}, the authors
introduces a solution for POMDP planning by first trainning a recurrent neural network (RNN) policy for the POMDP  and then restricting the RNN policy to finite-state controllers, which is then verified using formal verification techniques to provide provable guarantees against temporal logic specifications.  
To our knowledge, existing POMDP planning has not considered information-theoretic objectives related to the inference of temporal properties given partial observations.  

 \ac{pomdp} models are widely used for robotic motion planning with imperfect sensing modality and active information acquisition tasks.  
In \cite{Bai2014}, the authors introduce joint planning and active perception using a continuous-state POMDP whose observation function is influenced by the action as well, similar to our modeling. However, they  consider reward-based performance measures and show active perception plays a key role in improving total reward.    In \cite{Ghasemi2019}, the author add a limited budget for the ability of information gathering of the agent and use the point-based value iteration method to obtain a generalized greedy policy that is proved to be near-optimal. 
Recent literature \cite{pulido2025uncertaintyawareguidancetargettracking} uses the maximum expected entropy reduction (EER) method to track a target whose motion is modeled by a transformer-based neural network. The EER method estimate the target state by the Shannon entropy and optimizes the entropy reduction to reduce the uncertainty of the target.
In our recent work \cite{shi2024active}, we study active perception to infer the initial state of a \ac{hmm} and also  measure the uncertainty of the initial state  by the Shannon entropy. However, the perception action cannot influence the transition function but only the emission function of the \ac{hmm}.

In formal synthesis, 
active perception has been introduced for qualitative planning for partially-observable non-deterministic systems  \cite{fuSynthesisJointControl2016,liTemporalLogicTask2024}. These planning methods involve offline computation of \emph{winning} belief states that can ensure the satisfaction of a given temporal logic objective and online-active-sensing decisions to reach these winning belief states. Our work differs from existing methods in temporal logic planning with active perception in two aspects: First, our objective is quantitative instead of qualitative (\ie, satisfy the objective with probability one or surely); second, our perception objective needs not to be coupled with the control objective. Aside from the task's success, the agent aims to collect informative observations for other inference tasks.

\paragraph*{Our contributions}

We develop a policy gradient-based approach to compute a (local) optimal joint control and active perception policy. Building on our prior work on active perception in \ac{hmm}s \cite{shi2024active},  we develop a method for computing the joint control and perception policy gradient of the conditional entropy of the secret state given observations. The approach leverages the input-output observable operator models \cite{thonLinksMultiplicityAutomata2015} for efficient computation of posterior estimates in critical temporal events. 

In section~\ref{sec:pre}, we provide the definition of \ac{lpomdp} and the \ac{ltlf}. In section~\ref{sec:problem}, we formulate the joint control and perception problem as an optimization problem. The objective function is defined as a weighted sum of the entropy of the secret and the probability of completing a task. In section~\ref{sec:compute}, we develop the policy gradient algorithm to solve the optimization problem. In section~\ref{sec:simulation}, we demonstrate the effectiveness of the proposed policy gradient algorithm by a graph-like \ac{pomdp} with task specifications inspired by robotic surveillance and patrolling tasks.

\section{PRELIMINARIES}
\label{sec:pre}

\noindent \textbf{Notation} The set of real numbers is denoted by $\reals$. Random variables will be denoted by capital letters, and their realizations by lowercase letters (\eg, $X$ and $x$).  A sequence of random variables and their realizations with length $T$ are denoted as $X_{0:T}$ and $x_{0:T}$. The notation $x_i$ refers to the $i$-th component of a vector $x \in \reals^{n}$ or to the $i$-th element of a sequence $x_0, x_1, \ldots$, which will be clarified by the context. 
Given a finite set $\mathcal{S}$, let $\dist{\mathcal{S}}$ be the set of all probability distributions over $\mathcal{S}$. The set $\mathcal{S}^{T}$ denotes the set of sequences with length $T$ composed of elements from $\mathcal{S}$, and $\mathcal{S}^\ast$ denotes the set of all finite sequences generated from $\mathcal{S}$. The empty string in $\mathcal{S}^\ast$ is denoted by $\varnothing$.

We consider the planning problem modeled as a \ac{lpomdp}. 
\begin{definition}[\ac{lpomdp}]
A \ac{lpomdp} is defined as a tuple $
M = \langle \calS , \calO, \calA, P, E, S_0,  \calAP, L  \rangle$
where 
\begin{inparaenum}
    \item $\calS$ is a finite  state space. 
    \item $\calO$ is a finite set of observations.
    \item $\calA $ is a finite set of actions including \emph{control actions} and \emph{perception actions}. 
    \item $P:\calS \times \calA \rightarrow \dist{\calS}$ is the probabilistic transition function.
    \item $E: \calS \times \calA \rightarrow \dist{\calO}$ is the emission function (observation function) that takes a state $s$, a perception action $a$, and outputs a distribution over observations.  
    \item $S_0$ is a random variable representing the initial state. The distribution of $S_0$ is denoted by $\mu_0$. And $\mathcal{S}_0$ denotes the set of initial states.
    \item $\calAP$ is the set of atomic propositions; and
    \item $L: S\rightarrow 2^{\calAP}$ is the labeling function that maps a state to a set of atomic propositions that evaluate true at that state. 
\end{inparaenum}
\end{definition} 

A finite play $\rho = s_0 a_0 s_1 a_1 s_2\ldots s_n $ is a sequence of interleaving states and players' actions such that $P(s_{i+1}|s_i,a_i) > 0$ for all integers $0 \leq i \leq n-1$. 
The labeling of the play $\rho$, denoted $L(\rho)$, is defined as $L(\rho) = L(s_0)L(s_1)\ldots L(s_n)$. That is, the labeling function omits the actions from the play and applies to states only.

%
We consider the planning objective in the \ac{lpomdp} expressed by \ac{ltlf} formulas.
\begin{definition}[\ac{ltlf} Syntax \cite{de2013linear}]

Given a set of atomic propositions $\calAP$, a \ac{ltlf} formula over $\calAP$ is defined using the following grammar:
\[ \varphi := \top \mid \bot \mid p \mid \neg p \mid  \varphi \land \varphi \mid \varphi \vee \varphi \mid \Next \varphi \mid \varphi_1 \Until \varphi_2 , \] where $\top,\bot$ are universally true and false, respectively, $p \in \calAP$ is an atomic proposition, $\neg$, $\land$ and $\vee$ are the Boolean operators \textit{negation}, \textit{disjunction}, and \textit{conjunction}, respectively, and $\Next$ and $ \Until$   denote the temporal operators \textit{next}, and \textit{until}  respectively.
\end{definition}
Additionally, we also introduce two temporal operators:  ``Eventually'' ($\Eventually$) is defined as $\Eventually \varphi := \truev \Until \varphi$; and   ``Globally'' ($\Always$) is defined as $\Always \varphi := \neg \Eventually \neg \varphi$.  For formal semantics of \ac{ltlf}, see~\cite{de2013linear}.

The language of \ac{ltlf} formula $\varphi$ can be represented by the set of words accepted by an automaton defined as follows:

\begin{definition}[\ac{dfa}]
\label{def:dfa}
A \ac{dfa} is a tuple $\dfa = (\mathcal{Q}, \Sigma, \delta, \init, F )$ in which (1) $\mathcal{Q}$ is the set of states; (2) $\Sigma$ is the alphabet (set of input symbols); (3) $\delta: \mathcal{Q} \times \Sigma \rightarrow \mathcal{Q}$ is a deterministic transition function and is complete \footnote{For any $\mathcal{Q} \times \Sigma$, $\delta(q,\sigma)$ is defined. An incomplete transition function can be completed by adding a sink state and redirecting all undefined transitions to that sink state.}; (4) $\init$ is the initial state; and (5) $F \subseteq \mathcal{Q}$ is the set of accepting states. 
\end{definition}

The transition function $\delta$ is extended as $\delta(q, \sigma \cdot w) = \delta(\delta(q,\sigma), w)$ where the state $q \in \mathcal{Q}$ and input $\sigma \in \Sigma$. %
A word $w = w_0 w_1 \ldots w_n \in \Sigma^\ast$ is accepted by $\dfa$ if and only if $\delta(\init, w)\in F$. 
The set of words accepted by $\dfa$ is called the language of $\dfa$, denoted by $\mathcal{L}(\dfa)$.
Formally, $\mathcal{L}(\dfa) = \{ w \in \Sigma^* \mid \delta(q_0, w) \in F \}$.
For notation simplicity, let $\Sigma \coloneqq 2^\calAP$. 

In addition to satisfying a task in temporal logic,  the agent seeks to estimate key properties relevant to the task. For instance, based on its partial observations, the agent may aim to determine whether the labeling of the trajectory complies with a given temporal logic property. We present the following informal problem definition:
\begin{problem}
Given a \ac{lpomdp} $M$ and a \ac{ltlf} formula expressed by a \ac{dfa} $\dfa = (\mathcal{Q}, \Sigma, \delta, \init, F , F_{sec})$ where $F\subseteq Q$ be a set of \emph{final} states and $F_{sec} \subseteq Q$ be a set of \emph{secret} states, a finite horizon $T>0$,  the joint control and active perception planning objective is to \begin{inparaenum}
    \item maximize the probability in satisfying the \ac{ltlf} formula within the finite time;
    \item minimizing the uncertainty regarding whether the labeling sequence $\sigma_{0:T}$  of the trajectory leads to a secret state, i.e., whether \( \delta(\init, \sigma_{0:T}) \in F_{sec} \).
\end{inparaenum}
\end{problem}




\section{PROBLEM FORMULATION}
\label{sec:problem}

In this section, we develop a mathematical formulation of the joint control and active perception problem.

\begin{definition}[Product POMDP]
\label{def:product_mdp}
Given the labeled-POMDP $M=\langle \calS , \calO, \calA, P, E, S_0,  \calAP, L  \rangle$, and a \ac{dfa} $\mathbb{A} = (\mathcal{Q},2^\calAP, \delta, \init, F ,F_{sec})$ describing P1's  task objective $\varphi$ and $F_{sec}$ is a set of secret states, the product POMDP a tuple 
\[
\mathcal{M} = (\mathcal{V}, \calO, \mathcal{A}, \Delta, E, v_0, \calF, \calF_{sec} )
\]
in which
\begin{itemize}
    \item $\mathcal{V} = \mathcal{S} \times \mathcal{Q}$ is the state space, where each state $(s,q)$ includes a \ac{pomdp} state $s \in \mathcal{S}$, and an automaton state $q \in \mathcal{Q}$; 
    \item $\calO$ is a finite set of observations;
    \item $\mathcal{A}$ is the action space, same as in $M$;
    \item $\Delta: \mathcal{V} \times \mathcal{A} \rightarrow \dist{\mathcal{V}}$ is the probabilistic transition function such  given a state $v \coloneqq (s,q) \in \mathcal{V}$, for any state $s' \in \mathcal{S}$ and an action $a \in \mathcal{A}$, 
    \[
    \Delta((s',q')| (s, q), a) =  P(s'|s,a)
    \]
    where $q'=\delta(q, L(s'))$. Else, $\Delta((s',q')| (s, q), a) = 0$.
    \item $E: \mathcal{V} \times \mathcal{A} \rightarrow \dist{\calO}$ is the emission function (observation function) that takes an augmented state $v$, a perception action $a$, and outputs a distribution over observations. $E(o|v) = E(o|s)$ for $v = (s, q) \in \mathcal{V}$. 
    \item $\mu_0$ is the initial state distribution that the initial state $v_0 \coloneqq (s_0,q_0) \sim \mu_0$ with $s_0 \sim \chi_0, q_0 = \delta(\init, L(s_0))$. 
  \item   $\calF \coloneqq \calS\times F$ is a set of final product  states.
  \item $\calF_{sec} \coloneqq \calS\times F_{sec}$ is a set of secret product states.
\end{itemize}
\end{definition} 

A \emph{non-stationary, observation-based} joint control and perception policy is a function $\pi: \calO^\ast \to \dist{\calA}$ that maps a history $o_{0:t}$ of observations to a distribution $\pi(\cdot|o_{0:t})$ over actions.

For a given product \ac{pomdp} $\mathcal{M}$, a policy $\pi$ induces a discrete stochastic process $\mathcal{M}_\pi \coloneqq \{V_t, A_t,  O_t, t\in \nat\}$, where $V_t \in \mathcal{V}$ and $O_t \in \mathcal{O}$ are the underlying hidden state and observation at the $t$-th time step, and   
\begin{equation}
\begin{aligned}
    &P^\pi(O_t = o | O_{0:t-1}= o_{0:t-1}, V_t=v_t) =\\  
    &\sum_{a\in \calA} E(o| v_t, a )\pi(a| o_{0:t-1}), \forall t>0.
\end{aligned}
\end{equation}
when the policy $\pi$ is understood from the context, we write $P$ instead of $P^\pi$ for clarity.


 For the stochastic process $\calM_\pi$,
  for any finite horizon $T$, the agent's partial observation about a path includes a sequence $o_{0:T}$ of observations and a sequence $a_{0:T}$ of actions. We denote the agent's information by $y_{0:T}=(o_{0:T}, a_{0:T})$.   We refer to $y_{0:T}$ as an observation sequence with the understanding that it is the joint observation and action sequence.
  We introduce two sets of random variables, which will help to facilitate the formulation of the optimization problem. 
\begin{definition} 
Given the product \ac{pomdp} $\calM$ and a policy $\pi$ induce a discrete stochastic process $\calM_\pi = \{V_t, O_t, t\in \nat\}$ with $T>0$ a finite number. Let $ \calF   $ be a set of \emph{final} product states and $\calF_{sec} $ be a set of \emph{secret} product states. For any $0\le t\le T$, let the random variable $W_t$ and $Z_t$ be defined by 
\[ W_t = \mathbf{1}_{\calF} (V_t).\] 
and
\[
Z_t =  \mathbf{1}_{\calF_{sec}} (V_t).
\]
That is, $W_t$  (resp. $Z_t$) is the random variable representing if the $t$-th state is in the set $\calF $ of final states (resp. the set $\calF_{sec}$ of secret states). The random variables $W_t$ and $Z_t$ are binary $\mathcal{W} = \{0,1\}$, $\mathcal{Z} = \{0,1\}$, for each $0 \le t \le T$.
\end{definition}

By the product construction, for any path $\rho = v_0 a_0 v_1 a_1,\ldots v_T$ in the product \ac{pomdp}, $v_T \in \calF $ means the \ac{ltlf} formula $\varphi$ is satisfied, and thus $W_T =1$.  Similarly, if $v_T \in  \calF_{sec}$, then the secret property is satisfied, and thus $Z_T = 1$. Thus, maximizing the probability of satisfying $\varphi$ is equivalent to maximizing the probability of $W_T=1$.
 
 Entropy measures are commonly employed to quantify the uncertainty about a random variable \cite{khouzani2017leakage}.  The conditional entropy of a  random variable $X_2$ given another random variable $X_1$ is defined by
\[
H(X_2|X_1) =   -\sum_{x_1\in \mathcal{X}}\sum_{x_2\in \mathcal{X}} p(x_1,x_2) \log p(x_2|x_1).
\]
The conditional entropy measures the uncertainty about $X_2$ given knowledge of $X_1$.  
A lower conditional entropy makes it easier to learn $X_2$ from observing a sample of $X_1$.
In this context, the uncertainty regarding whether a secret state in \( \calF_{sec} \) is reached at time \( T \) can be quantified using the conditional entropy of \( Z_T \), given the knowledge of the \ac{lpomdp}, the policy \( \pi \), and the observation sequence \( o_{0:T} \). Accordingly, we present the formal problem definition.

\begin{problem} 
\label{problem:formal}
    Let a  \ac{lpomdp} $\calM$ and a finite horizon $T$ be given. Let $\Pi$ be a policy space.
    Compute a policy $\pi \in \Pi$ that minimizes the conditional entropy of $Z_T$  given the partial information $Y_{0:T}= (O_{0:T}, A_{0:T})$ induced by $\pi$ and maximizing the probability of satisfying the task. That is,
\begin{equation*}
\optmin_{\pi \in \Pi} \quad H(Z_T  |Y_{0:T};{\calM_\pi}) -\alpha P^\pi(W_T=1),
\end{equation*}
where $\alpha>0$ is a weight parameter, for balancing the two objectives, and
the conditional entropy term 
\begin{equation}
\begin{aligned}
& H(Z_T  |Y_{0:T}; {M_\pi}) \coloneqq \\  
& - \sum_{\substack{z_T \in \mathcal{Z}, y_{0:T} \in \calO^T\times \calA^T}}  & P^\pi (z_T, y_{0:T} ) \cdot \log P^\pi (z_T \mid y_{0:T}) 
\end{aligned}
\end{equation}
and $P^\pi (z_T, y_{0:T} )$ is the joint probability of reaching a secret state and observing $y_{0:T}$ under the policy $\pi$ and $P^\pi (z_T \mid y_{0:T})$ is the conditional probability of reaching a secret state given observation $y_{0:T}$. 
\end{problem}

\begin{remark}When  $F_{sec} = F $ in the automaton $\dfa$, the planning objective is to maximize the probability of task satisfaction while minimizing the uncertainty in determining whether it is satisfied from partial observations. In this case, the active perception objective enables  the agent to actively gather information to improve its understanding of the task progress. In scenario when the optimal policy can complete the task with 50\% probability, the active perception agent will have more information to infer if a task is satisfied or not in each run from the partial observations.
We consider a more general case when the two sets can be different.
\end{remark}



\section{COMPUTING THE JOINT CONTROL AND ACTIVE PERCEPTION POLICY}
\label{sec:compute}
In this section, we develop a policy gradient method to solve Problem~\ref{problem:formal}. Consider a class of parametrized (stochastic) policies $\{\pi_\theta\mid \theta\in \Theta\}$. We denote by $\calM_\theta$ the stochastic process $\{V_t,A_t,O_t, t\in \nat\}$ induced by a policy $\pi_\theta$, and $P_\theta(\cdot)$ the corresponding probability measure. We use $Y$ as abbreviation of $Y_{0:T}$ and $y$ as abbreviation of $y_{0:T}$ in the following. 

To obtain the locally optimal policy parameter $\theta$, we initialize a policy parameter $\theta_0$ and carry out the gradient descent. At each iteration $\tau\ge 1$, 
\begin{equation}
\label{eq:gradient_decsent_algorithm}
\theta_{\tau + 1} = \theta_{\tau} - \eta  [\nabla_\theta H (Z_T  |Y;\theta_{\tau}) - \alpha \nabla_\theta P_{\theta_\tau}(W_T = 1)],
\end{equation}
where $\eta$ is the step size and $H(Z_T|Y; \theta) \coloneqq H(Z_T |Y; \calM_{\pi_\theta})$ and $P_\theta(W_T = 1) = P^{\pi_\theta}(W_T = 1)$. The following theorem gives the gradient of the conditional entropy $H(Z_T |Y; \theta)$, whose sub-step will also provide a method to compute the gradient of $P_\theta(W_T = 1)$ with respect to $\theta$.

\begin{lemma}
\label{lma:entropy-grad}
The gradient of the conditional entropy w.r.t. the policy parameter $\theta$ is 
\begin{equation}
\label{eq:entropy-grad-expectation}
\begin{aligned}
&\nabla_\theta H (Z_T |Y;\theta) =  \expect_{y\sim \calM_\theta} \left[ H(Z_T|Y= y)\nabla_\theta \log P_\theta(y )\right].
\end{aligned}
\end{equation}
\end{lemma}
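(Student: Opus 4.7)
The plan is to rewrite the conditional entropy in the chain-rule form
\[
H(Z_T\mid Y;\theta) \;=\; \sum_{y} P_\theta(y)\,H(Z_T\mid Y=y),
\]
where $H(Z_T\mid Y=y) = -\sum_{z_T} P_\theta(z_T\mid y)\log P_\theta(z_T\mid y)$, and then differentiate term by term. Applying the product rule yields two contributions: one in which the gradient lands on $P_\theta(y)$ and one in which it lands on the per-observation entropy $H(Z_T\mid Y=y)$. The target identity follows provided the second contribution vanishes, so the bulk of the argument is to show exactly that.

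First I would establish the key factorization. Using the product POMDP dynamics and the observation-based policy $\pi_\theta$, the joint probability of a state-observation-action trajectory factors into a \emph{policy part} $\prod_{t} \pi_\theta(a_t\mid o_{0:t-1})$ that depends only on $y=y_{0:T}$ (not on the latent $v_{0:T}$), and a \emph{model part} involving $\mu_0$, $\Delta$, and $E$ that depends on both. Summing over all latent trajectories consistent with a given value of $z_T$ shows that the policy part pulls out of both $P_\theta(z_T,y)$ and $P_\theta(y)$, so it cancels in the ratio
\[
P_\theta(z_T\mid y) \;=\; \frac{P_\theta(z_T,y)}{P_\theta(y)}.
\]
Hence $P_\theta(z_T\mid y)$ is in fact independent of $\theta$; equivalently, the Bayesian posterior over $z_T$ given the full observation-action history depends on the model but not on the policy that generated $y$. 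Consequently $\nabla_\theta H(Z_T\mid Y=y)=0$ for every $y$ in the support.

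With that in hand, only the first contribution survives:
\[
\nabla_\theta H(Z_T\mid Y;\theta) \;=\; \sum_{y} \bigl[\nabla_\theta P_\theta(y)\bigr]\, H(Z_T\mid Y=y).
\]
Applying the log-derivative identity $\nabla_\theta P_\theta(y)=P_\theta(y)\,\nabla_\theta\log P_\theta(y)$ and rewriting the sum as an expectation over $y\sim\calM_\theta$ gives the claimed formula~\eqref{eq:entropy-grad-expectation}.

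\textbf{Main obstacle.} The only nontrivial step is the cancellation of the policy factor in $P_\theta(z_T\mid y)$; this is where the specific structure of the LPOMDP and of observation-based policies matters. I would make it airtight by writing out the joint density explicitly on histories, grouping the $\pi_\theta(a_t\mid o_{0:t-1})$ terms that depend only on $y$, and summing over $v_{0:T}$ restricted to $\mathbf{1}_{\calF_{sec}}(v_T)=z_T$. A small care point is that actions can influence emissions, but since $a_{0:T}$ is part of $y$, the $E(o_t\mid v_t,a_t)$ factors sit in the model part as functions of the latent trajectory and the given $y$, so they do not reintroduce a $\theta$-dependence in the posterior. Once that factorization is displayed cleanly, the rest is a routine application of the log-derivative trick.
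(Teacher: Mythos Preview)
Your proposal is correct and follows essentially the same route as the paper: decompose $H(Z_T\mid Y;\theta)=\sum_y P_\theta(y)\,H(Z_T\mid Y=y)$, observe that $H(Z_T\mid Y=y)$ is independent of $\theta$, and apply the log-derivative trick. You go further than the paper by spelling out the factorization argument that makes $P_\theta(z_T\mid y)$ policy-independent; the paper simply asserts this independence in one line (``since the observation trajectory $y$ is given and includes an action sequence'') and defers the explicit justification to a later proposition.
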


\begin{proof}
The conditional entropy $H (Z_T |Y;\theta)$ can be written as
\begin{equation}
H(Z_T |Y;\theta) = \sum_y H(Z_T |Y = y) P_\theta(y).
\end{equation}
Note that the entropy term $H(Z_T |Y = y)$ does not depend on policy parameter $\theta$ since the observation trajectory $y$ is given and includes an action sequence. Then the gradient of $H(Z_T |Y;\theta)$ is given by
\begin{equation}
\nabla_\theta H(Z_T |Y;\theta) =  \sum_y H(Z_T |Y = y) \nabla_\theta P_\theta(y).
\end{equation}
By the logarithm trick $\nabla_\theta P_\theta(y) = P_\theta(y) \nabla_\theta \log P_\theta(y)$, we obtain
\begin{equation}
\begin{aligned}
\nabla_\theta & H(Z_T |Y;\theta) = \sum_y P_\theta(y) H(Z_T |Y = y) \nabla_\theta \log P_\theta(y) \\
& = \expect_{y\sim \calM_\theta} \left[ H(Z_T|Y= y)\nabla_\theta \log P_\theta(y )\right],
\end{aligned}
\end{equation}
which completes the proof.
\end{proof}


To calculate the gradient $H(Z_T|Y;\theta)$, we introduce the \emph{observable operator} augmented with control actions. The observable operator \cite{jaeger2000observableoperator} has been proposed to represent a discrete \ac{hmm} and used to calculate the probability of an observation sequence in an \ac{hmm} using matrix multiplications.

Let the random variable of state, observation, and action, at time point $t$ be denoted as $V_t, O_t, A_t$, respectively. 
We introduce an indexing of the product state set $\cal V$. Let $N =\lvert{\mathcal{V}\rvert}$ be the total number of states. The transition function $\Delta$ in the product \ac{pomdp} with domain $\mathcal{V}\times \calA$ and codomain $\dist{\mathcal{V}}$ is directly extended to the domain $\{1,\ldots, N\}\times A$ and co-domain $\dist{\{1,\ldots, N\}}$  using the state indices.

Consider an action $a\in \calA$, the $i,j$-th entry of the transposed transition matrix $\matT^a \in \reals^{N\times N }$   is defined to be:
\[
\matT_{i,j}^a = \Delta(i|j,a), \forall i , j \in \{1,\ldots, N\} 
\]
which is the probability of reaching state $i$ from state $j$ with action $a$.
Similarly, we index the observation set $\cal O$ as $\{1,\ldots, M\}$ and refer to each $o \in \calO$ by its index. 
For each $a\in \calA$, Let $\matO^a \in \reals^{M \times N}$ be the observation probability matrix with $\matO_{o,j}^{a} =E( o |  j, a)$.

\begin{definition} 
Given the product \ac{pomdp} $\calM$, for any pair of observation and  action $(o,a)$,
 the observable operator given  actions $\matA_{o|a}$ is a matrix of size $N \times N$ with its $ij$-th entry defined as $$
 \matA_{o |a }[i,j] =  \matT_{i , j}^a \matO_{o,j}^{a} \ ,$$
 which is the probability of transitioning from state $j$ to state $i$ and at the state $j$,  an observation $o$ is emitted given   perception action $a$.
  In matrix form, 
\[
\matA_{o |a } = \matT^a \text{diag}(\matO_{o, 1}^{a }, \dots, \matO_{o , N}^{a }).
\]
\end{definition}

The following propositions~\ref{prop:observation-action-probability}, \ref{prop:observation_prob}, \ref{prop:two_grads_equal} are   extensions of  our prior work \cite{shi2024active,udupa2025}. The difference is that in \cite{shi2024active,udupa2025}, action does not affect the state transition. In this current problem, action $a$ affects both the transition matrix $\matT^a$ and the observation matrix $\matO^a$. 
\begin{proposition}\cite{shi2024active,udupa2025}
\label{prop:observation-action-probability} The probability of an observation sequence $o_{0:t}$ given a sequence of   actions $a_{0:t}$, can be written as matrix operations,
\begin{equation}
\label{eq:matrix_operation}
P(o_{0:t} | a_{0:t}) = \mathbf{1}_N^\top \matA_{o_t|a_t} \dots \matA_{o_0|a_0} \mu_0.
\end{equation}
In addition, 
for a fixed state $v_{t + 1} \in \mathcal{V}$ at time point $t + 1$, we have 
\begin{equation}
\label{eq:matrix_operation_sT}
P(v_{t+1}, o_{0:t} | a_{0:t}) = \mathbf{1}_{v_{t+1}}^\top \matA_{o_t|a_t} \dots \matA_{o_0|a_0} \mu_0.
\end{equation}
where $\mathbf{1}_{v_{t+1}}$ is a one-hot vector which assigns 1 to the $v_{t+1}$-th entry.
\end{proposition}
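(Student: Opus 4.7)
The plan is to establish (\ref{eq:matrix_operation_sT}) first by induction on $t$, and then derive (\ref{eq:matrix_operation}) as an immediate corollary via marginalization over $v_{t+1}$. Since the action sequence $a_{0:t}$ is conditioned upon and thus treated as deterministic, the product POMDP under fixed $a_{0:t}$ reduces to a time-inhomogeneous hidden Markov chain with transition matrices $\matT^{a_t}$ and emission matrices $\matO^{a_t}$; this lets us propagate the joint state-observation probabilities through repeated application of the observable operator.

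For the base case $t=0$, I would expand
\begin{equation*}
P(v_1, o_0 \mid a_0) = \sum_{v_0\in \mathcal{V}} P(v_1 \mid v_0, a_0)\, P(o_0 \mid v_0, a_0)\, \mu_0(v_0) = \sum_{v_0} \matT_{v_1,v_0}^{a_0} \matO_{o_0,v_0}^{a_0} \mu_0(v_0),
\end{equation*}
which by the definition of $\matA_{o_0\mid a_0}$ equals $\mathbf{1}_{v_1}^\top \matA_{o_0\mid a_0}\mu_0$.

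For the inductive step, assume $P(v_t, o_{0:t-1}\mid a_{0:t-1}) = \mathbf{1}_{v_t}^\top \matA_{o_{t-1}\mid a_{t-1}}\cdots \matA_{o_0\mid a_0}\mu_0$. I would condition on the intermediate state $v_t$ and apply the Markov and emission properties of the product POMDP:
\begin{equation*}
P(v_{t+1}, o_{0:t}\mid a_{0:t}) = \sum_{v_t} P(v_{t+1}\mid v_t, a_t)\, P(o_t\mid v_t, a_t)\, P(v_t, o_{0:t-1}\mid a_{0:t-1}),
\end{equation*}
where the key observation is that given $v_t$ and $a_t$, the pair $(v_{t+1}, o_t)$ is conditionally independent of $(o_{0:t-1}, a_{0:t-1})$. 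Substituting the definition of $\matA_{o_t\mid a_t}$ and the inductive hypothesis yields $\mathbf{1}_{v_{t+1}}^\top \matA_{o_t\mid a_t}\matA_{o_{t-1}\mid a_{t-1}}\cdots\matA_{o_0\mid a_0}\mu_0$. Marginalizing over $v_{t+1}\in \mathcal{V}$ then produces the $\mathbf{1}_N^\top$ prefix in (\ref{eq:matrix_operation}).

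I do not anticipate any real obstacle beyond bookkeeping; the only subtle point is verifying the conditional-independence identity above, which I would justify by observing that under a fixed (non-random) action sequence the process $\{V_t, O_t\}$ is a time-inhomogeneous HMM with kernels $\matT^{a_t}$ and $\matO^{a_t}$, so the new ingredient relative to \cite{shi2024active,udupa2025} is simply that the observable operator is now indexed by the action applied at each step while preserving the factorized structure $\matT^a\mathrm{diag}(\matO^a)$ that makes the matrix-product recursion go through.
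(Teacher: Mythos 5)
Your proof is correct and matches the intended argument: the paper itself states Proposition~\ref{prop:observation-action-probability} without proof (citing \cite{shi2024active,udupa2025}), and the standard derivation in those works is exactly your induction, in which fixing $a_{0:t}$ turns the process into a time-inhomogeneous HMM so that $\matA_{o_t|a_t} = \matT^{a_t}\mathrm{diag}(\matO^{a_t}_{o_t,1},\dots,\matO^{a_t}_{o_t,N})$ propagates the joint probability $P(v_{t+1}, o_{0:t}\mid a_{0:t})$, with \eqref{eq:matrix_operation} following by marginalization. You also correctly identify and justify the one subtle point --- that $o_t$ and $v_{t+1}$ are conditionally independent given $(v_t,a_t)$, and that $P(\cdot\mid a_{0:t})$ denotes the open-loop probability under a fixed action sequence, which is precisely the interpretation Proposition~\ref{prop:observation_prob} relies on when reintroducing the policy factors $\pi_\theta(a_i|o_{0:i-1})$.
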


To compute the gradient $\nabla_\theta H(Z_T|Y;\theta)$, we will need the value of $\nabla_\theta \log P_\theta(y)$. We start by calculating the probability of an observation sequence $y=(o_{0:t}, a_{0:t}) $ in $\calM_\theta$. 
\begin{proposition} 
\label{prop:observation_prob}
\cite{shi2024active}
The probability of a sequence  $y=(o_{0:t}, a_{0:t}) $ of observations and actions in $\calM_\theta$ can be computed as follows:
\begin{equation}
\begin{aligned}
\label{eq:joint_final}
P_\theta(o_{0:t}, a_{0:t}) & =  \frac{P(o_{0:t} | a_{0:t}) }{P(o_{0} | a_{0}) }  \prod_{i=0}^{t} \pi_\theta(a_{i}| o_{0:i-1}). 
\end{aligned}
\end{equation}
where $o_{0:-1} \coloneqq \varnothing$ is the initial empty  observation.
\end{proposition}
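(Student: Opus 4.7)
My plan is to expand $P_\theta(o_{0:t}, a_{0:t})$ directly by the chain rule of probability and identify each factor using the policy structure and the \ac{pomdp} dynamics. Since the proposition is framed as an extension of the authors' prior work in which the action affected only the emission, the only new bookkeeping is that $a$ now enters both the transition matrix $\matT^a$ and the observation matrix $\matO^a$; the proof structure should go through essentially unchanged once the observable-operator $\matA_{o|a}$ is defined as above.

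First, by the chain rule applied in temporal order (action, then observation, at each step),
\begin{equation*}
P_\theta(o_{0:t}, a_{0:t}) = \prod_{i=0}^{t} P_\theta(a_i \mid o_{0:i-1}, a_{0:i-1})\, P_\theta(o_i \mid o_{0:i-1}, a_{0:i}).
\end{equation*}
Because the policy is observation-based, each action factor collapses to $\pi_\theta(a_i \mid o_{0:i-1})$, which supplies the $\theta$-dependent product on the right-hand side of the claimed identity. Each observation factor $P(o_i \mid o_{0:i-1}, a_{0:i})$ is independent of $\theta$ and can be written by marginalizing the hidden state against the filtered belief; recursively applying the observable-operator update then identifies the product of these factors with $\mathbf{1}_N^\top \matA_{o_t|a_t}\cdots \matA_{o_0|a_0}\mu_0 = P(o_{0:t}\mid a_{0:t})$ from Proposition~\ref{prop:observation-action-probability}.

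To match the exact form of the statement, I would then isolate the initial step via $P(o_{0:t}\mid a_{0:t}) = P(o_0 \mid a_0)\, P(o_{1:t}\mid o_0, a_{0:t})$ together with the convention $o_{0:-1}=\varnothing$. Under this decomposition, the denominator $P(o_0 \mid a_0)$ in the claim corresponds to the belief renormalization after absorbing the initial pair $(o_0, a_0)$; equivalently, to expressing the remaining likelihood as a matrix product starting from the posterior $\matA_{o_0|a_0}\mu_0 / P(o_0 \mid a_0)$ over $V_1$ rather than from the prior $\mu_0$.

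The main obstacle I anticipate is precisely this bookkeeping at the initial time step: the boundary convention $o_{0:-1}=\varnothing$ interacts with whether $\matA_{o_0|a_0}$ is treated as the first operator in the matrix recursion or as a belief update applied to $\mu_0$, and that choice is what produces the $P(o_0 \mid a_0)$ factor in the denominator. Away from this boundary, everything else reduces to standard manipulations of conditional probabilities and the observable-operator product already handled in Proposition~\ref{prop:observation-action-probability}.
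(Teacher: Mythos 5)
First, a point of comparison: the paper itself gives no proof of this proposition --- it is imported from \cite{shi2024active} with a citation only --- so your chain-rule derivation is the natural reconstruction, and its first two steps are sound. Decomposing
\begin{equation*}
P_\theta(o_{0:t}, a_{0:t}) = \prod_{i=0}^{t} \pi_\theta(a_i \mid o_{0:i-1})\, P(o_i \mid o_{0:i-1}, a_{0:i}),
\end{equation*}
and telescoping the observation factors via $P(o_i \mid o_{0:i-1}, a_{0:i}) = P(o_{0:i}\mid a_{0:i})/P(o_{0:i-1}\mid a_{0:i-1})$ (using that past observations do not depend on future actions, a fact the paper itself invokes in the proof of its Proposition~4) yields
\begin{equation*}
P_\theta(o_{0:t}, a_{0:t}) = P(o_{0:t}\mid a_{0:t}) \prod_{i=0}^{t} \pi_\theta(a_i \mid o_{0:i-1}),
\end{equation*}
with \emph{no} $P(o_0\mid a_0)$ in the denominator. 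Your derivation therefore proves an identity that differs from the stated equation~\eqref{eq:joint_final} by exactly the factor $P(o_0\mid a_0)$, and your final paragraph does not close that gap: factoring $P(o_{0:t}\mid a_{0:t}) = P(o_0\mid a_0)\,P(o_{1:t}\mid o_0, a_{0:t})$ puts $P(o_0\mid a_0)$ in the \emph{numerator}, so no amount of regrouping turns it into a quotient, and the appeal to ``belief renormalization'' is a description of filtering, not a probability identity. Conditioning on $o_0$ does not rescue it either, since under the paper's timing ($a_0$ chosen on the empty history, $o_0$ emitted from $v_0$ under $a_0$) one has $P_\theta(a_0\mid o_0)\neq \pi_\theta(a_0\mid \varnothing)$ in general.

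In fact, the stated formula cannot be the unconditional joint under this paper's own conventions: take $t=0$, and \eqref{eq:joint_final} gives $P_\theta(o_0,a_0) = \pi_\theta(a_0\mid\varnothing)$, which is independent of $o_0$ and sums over $(o_0,a_0)$ to $\lvert\calO\rvert$ rather than $1$; your telescoped version sums to $1$ as it should. The denominator is best explained as an artifact carried over from \cite{shi2024active}, where the initial observation--action pair is effectively given and the likelihood is conditioned on it (the same inheritance shows in Proposition~\ref{prop:two_grads_equal}, whose text refers to $P_\theta(y\mid s_0)$ while its equation uses $P_\theta(y)$). So rather than massaging your correct derivation to ``match the exact form,'' you should state plainly that under the process defined in this paper the correct identity drops the $P(o_0\mid a_0)$ factor --- noting, if you wish, that the discrepancy is inert downstream, since $P(o_0\mid a_0)$ is constant in $\theta$ and both versions give the same $\nabla_\theta \log P_\theta(y) = \sum_{i=0}^{t}\nabla_\theta\log\pi_\theta(a_i\mid o_{0:i-1})$ used in the gradient computations.
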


\begin{proposition}\cite{shi2024active}
\label{prop:two_grads_equal}
Given $y= (o_{0:T}, a_{0:T})$, 
the gradient of $\log P_\theta(y|s_0)$ can be computed as:
\begin{equation}
\label{eq:two_grads_equals}
\begin{aligned}
\nabla_\theta \log  P_\theta(y )=  \sum_{t = 0}^{T} 
\nabla_\theta \log  \pi_\theta(a_{t}| o_{0:t-1}).
\end{aligned}
\end{equation}
\end{proposition}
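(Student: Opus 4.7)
The plan is to start from the factorization of $P_\theta(o_{0:T}, a_{0:T})$ established in Proposition~\ref{prop:observation_prob} and then differentiate, exploiting the fact that the observation-given-action conditional probability is a property of the underlying POMDP and does not depend on the policy parameter $\theta$.

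First I would take logarithms on both sides of \eqref{eq:joint_final}, obtaining
\begin{equation*}
\log P_\theta(y) = \log P(o_{0:T}\mid a_{0:T}) - \log P(o_0 \mid a_0) + \sum_{t=0}^{T} \log \pi_\theta(a_t \mid o_{0:t-1}).
\end{equation*}
Then I would apply $\nabla_\theta$ term-by-term. The key observation is that the first two summands involve conditional probabilities over observation sequences \emph{given} the action sequence; by Proposition~\ref{prop:observation-action-probability}, these are expressible purely as products of the observable operators $\matA_{o|a}$ and the initial distribution $\mu_0$, all of which are determined by the \ac{lpomdp} $\calM$ and do not depend on $\theta$. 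Hence both gradients vanish, and only the policy-induced terms survive.

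The remaining step is just the standard log-derivative expansion
\begin{equation*}
\nabla_\theta \sum_{t=0}^{T} \log \pi_\theta(a_t \mid o_{0:t-1}) = \sum_{t=0}^{T} \nabla_\theta \log \pi_\theta(a_t \mid o_{0:t-1}),
\end{equation*}
which yields the claim.

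The main obstacle, though small, is justifying carefully that $P(o_{0:T}\mid a_{0:T})$ is indeed $\theta$-independent: conceptually, even though the actions themselves are drawn by $\pi_\theta$, once we condition on a specific action sequence the resulting likelihood of observations is dictated only by the POMDP's transition and emission kernels. Making this transparent via the observable-operator representation from Proposition~\ref{prop:observation-action-probability} is the cleanest route, and I would explicitly invoke it rather than arguing informally.
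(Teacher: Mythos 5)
Your proposal is correct and follows exactly the route the paper intends: the paper defers the proof to the cited prior work \cite{shi2024active}, but the derivation there is precisely this one---take logarithms of the factorization \eqref{eq:joint_final}, note that $P(o_{0:T}\mid a_{0:T})$ and $P(o_0\mid a_0)$ are $\theta$-independent because they are given by the observable-operator expression \eqref{eq:matrix_operation} built solely from $\matT^a$, $\matO^a$, and $\mu_0$, and differentiate the surviving policy terms. Your explicit appeal to Proposition~\ref{prop:observation-action-probability} to justify the $\theta$-independence (rather than an informal conditioning argument, which would be delicate since closed-loop conditioning on actions generally does carry policy dependence) is exactly the right care, and nothing is missing.
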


We also need the value of entropy $H(Z_T|Y = y)$ to compute the gradient $\nabla_\theta H(Z_T|Y;\theta)$. It can be written as
\begin{equation}
H(Z_T|Y = y) = \sum_{Z_T \in \mathcal{Z}} P(z_T|y) \log P(z_T|y).
\end{equation}


\begin{proposition}
Given $y= (o_{0:T}, a_{0:T})$, 
the probability $P(Z_{T} = 1| y)$ does not depend on the policy parameter $\theta$ since the actions are given. It can be computed as 
 
\begin{equation}
\label{eq:P_zT1_y}
\begin{aligned}
 P(Z_{T} & = 1| y) = \sum_{v_{T} \in \calF_{sec}} \obs(o_T|v_T, a_T) \\ 
& \cdot  P(v_{T}, o_{0:T-1}| a_{0:T-1}) /  P(o_{0:T}| a_{0:T}),
\end{aligned}
\end{equation}
where  the probability $P_\theta(v_{T}, o_{0:T-1}| a_{0:T-1})$ can be calculated by equation~\eqref{eq:matrix_operation_sT} and the probability $P_\theta(o_{0:T-1}| a_{0:T-1})$ can be calculated by equation~\eqref{eq:matrix_operation}.
\end{proposition}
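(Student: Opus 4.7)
The plan is to express the conditional probability via Bayes' rule, factor the joint distribution using the emission structure of the \ac{lpomdp} so that the final observation separates from the prefix, and then recognize the two resulting pieces as quantities already computable through the observable-operator identities in Proposition~\ref{prop:observation-action-probability}.

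First, since $\{Z_T=1\}$ is by definition the event $\{V_T \in \calF_{sec}\}$, applying Bayes' rule and conditioning on the full action sequence yields
\[
P(Z_T=1 \mid o_{0:T}, a_{0:T}) = \frac{\sum_{v_T \in \calF_{sec}} P(v_T, o_{0:T} \mid a_{0:T})}{P(o_{0:T} \mid a_{0:T})}.
\]

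Next, the key factorization step relies on two structural properties of the \ac{lpomdp}. The emission property says that, given $V_T = v_T$ and action $a_T$, the current observation $O_T$ is conditionally independent of the past, so $P(o_T \mid v_T, o_{0:T-1}, a_{0:T}) = E(o_T \mid v_T, a_T)$. The no-anticipation property says that $(V_T, O_{0:T-1})$ is determined by the past actions $a_{0:T-1}$ alone, so $P(v_T, o_{0:T-1} \mid a_{0:T}) = P(v_T, o_{0:T-1} \mid a_{0:T-1})$. Chaining the two gives
\[
P(v_T, o_{0:T} \mid a_{0:T}) = E(o_T \mid v_T, a_T) \, P(v_T, o_{0:T-1} \mid a_{0:T-1}),
\]
and substituting into the Bayes expression and summing over $v_T \in \calF_{sec}$ produces~\eqref{eq:P_zT1_y}. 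The term $P(v_T, o_{0:T-1} \mid a_{0:T-1})$ is then evaluated by~\eqref{eq:matrix_operation_sT} and the denominator $P(o_{0:T} \mid a_{0:T})$ by~\eqref{eq:matrix_operation}. The claim that the result is independent of $\theta$ follows immediately: every probability on the right-hand side involves only the intrinsic transition function $P$, emission function $E$, and initial distribution $\mu_0$, together with the conditioning actions; the policy $\pi_\theta$ has been integrated out by the conditioning on $a_{0:T}$.

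The main obstacle I anticipate is the rigorous justification of the no-anticipation step. Although intuitive, establishing $P(v_T, o_{0:T-1} \mid a_{0:T}) = P(v_T, o_{0:T-1} \mid a_{0:T-1})$ cleanly requires unfolding the law of the induced process $\calM_\pi$ and verifying that, under a non-stationary observation-based policy $\pi_\theta$, the conditional distribution of $A_T$ depends only on $O_{0:T-1}$, so that $A_T$ carries no extra information about $(V_T, O_{0:T-1})$ beyond what is already in $A_{0:T-1}, O_{0:T-1}$.
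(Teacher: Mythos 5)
Your proposal is correct and follows essentially the same route as the paper's proof: Bayes' rule over $v_T \in \calF_{sec}$, the emission conditional independence $P(o_T \mid v_T, o_{0:T-1}, a_{0:T}) = E(o_T \mid v_T, a_T)$, and the no-anticipation property of $a_T$, with the observable-operator formulas \eqref{eq:matrix_operation_sT} and \eqref{eq:matrix_operation} supplying the two factors. The only difference is bookkeeping --- you factor the joint $P(v_T, o_{0:T} \mid a_{0:T})$ directly, whereas the paper conditions on the prefix $(o_{0:T-1}, a_{0:T})$ and cancels a common $P(o_{0:T-1} \mid a_{0:T-1})$ term, arriving at the same identity.
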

\begin{proof}
First, the probability $P(Z_{T} = 1 | y)$ can be written as
\begin{equation}
\label{eq:prop4_eq0}
P(Z_{T} = 1 | y) = \sum_{v_{T} \in \calF_{sec}} P(v_{T} | y).
\end{equation}
The probability
\begin{equation}
\label{eq:prop4_eq1}
\begin{aligned}
P(v_{T}| y) = \frac{P(v_T, o_T| o_{0:T-1}, a_{0:T})}{P(o_T| o_{0:T-1}, a_{0:T})},
\end{aligned}
\end{equation} 
whose numerator can be rewritten as follows,
\begin{equation}
\label{eq:numerator-0}
\begin{aligned}
P(v_T, o_T | o_{0:T-1}, a_{0:T})& = P(o_T|v_T, o_{0:T-1}, a_{0:T}) \\
& \cdot P(v_T| o_{0:T-1}, a_{0:T}). 
\end{aligned}
\end{equation}
Since the observation $o_T$ only depends on $v_T, a_T$, we have
\begin{equation}
\label{eq:numerator-1}
P(o_T|v_T, o_{0:T-1}, a_{0:T}) = \obs(o_T|v_T,a_T).
\end{equation}
And since $v_T$ and $a_T$ are conditionally independent on observations $ o_{0:T-1}$ and the past action sequence $a_{0:T-1}$, we obtain
\begin{equation}
P(v_T| o_{0:T-1}, a_{0:T}) = P(v_T| o_{0:T-1}, a_{0:T-1}). 
\end{equation}
Furthermore, 
\begin{equation}
\label{eq:numerator-2}
P(v_{T}| o_{0:T-1}, a_{0:T-1}) = \frac{P(v_{T}, o_{0:T-1}| a_{0:T-1})}{P(o_{0:T-1}| a_{0:T-1})},
\end{equation} Substituting \eqref{eq:numerator-1} and \eqref{eq:numerator-2} into \eqref{eq:numerator-0}, the numerator becomes
\begin{equation}
\label{eq:numerator}
\begin{aligned}
P(v_T,& o_T | o_{0:T-1}, a_{0:T}) = \obs(o_T|v_T, a_T) \\ 
& \cdot  P(v_{T}, o_{0:T-1}| a_{0:T-1}) /  P(o_{0:T-1}| a_{0:T-1}).
\end{aligned}
\end{equation}
And the denominator
\begin{equation}
P(o_T| o_{0:T-1}, a_{0:T}) = \frac{P(o_{0:T}| a_{0:T})}{P(o_{0:T-1}| a_{0:T})}.
\end{equation}
The observations $o_{0:T-1}$ does not depend on the future action $a_T$, thus we have
\begin{equation}
\label{eq:denominator}
P(o_T| o_{0:T-1}, a_{0:T}) = \frac{P(o_{0:T}| a_{0:T})}{P(o_{0:T-1}| a_{0:T-1})}.
\end{equation}
Substitute the equation~\eqref{eq:numerator} and \eqref{eq:denominator} into equation~\eqref{eq:prop4_eq1}, we can cancel the term $P(o_{0:T-1}| a_{0:T-1})$ and obtain 
\begin{equation}
\begin{aligned}
 P(Z_{T} & = 1| y) = \sum_{v_{T} \in \calF_{sec}} \obs(o_T|v_T, a_T) \\ 
& \cdot  P(v_{T}, o_{0:T-1}| a_{0:T-1}) /  P(o_{0:T}| a_{0:T}).
\end{aligned}
\end{equation}
\end{proof}

The following proposition gives the gradient of the probability $P_\theta(W_T = 1)$.

\begin{proposition}
the gradient of the probability $P_\theta(W_T = 1)$ w.r.t. policy parameter $\theta$ is
\begin{equation}
\label{eq:success-gradient}
\nabla_\theta P_\theta(W_T = 1) = \sum_y P(W_T = 1|y) P_\theta(y) \nabla_\theta \log P_\theta(y). 
\end{equation}
\end{proposition}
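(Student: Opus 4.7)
The proof plan is to mirror the structure used in Lemma~\ref{lma:entropy-grad}, with $H(Z_T \mid Y=y)$ replaced by $P(W_T=1 \mid y)$. The key observation I would exploit is that, once the joint observation-action trajectory $y = (o_{0:T}, a_{0:T})$ is fixed, the conditional probability $P(W_T = 1 \mid y)$ is a functional of the LPOMDP dynamics $(\Delta, E)$ and the initial distribution $\mu_0$ only, and does \emph{not} depend on the policy parameter $\theta$. This is the same reasoning that justified pulling $H(Z_T \mid Y = y)$ out of the $\theta$-gradient in Lemma~\ref{lma:entropy-grad}, and it is arguably the main conceptual point of the proposition.

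Concretely, I would proceed as follows. First, marginalize $W_T$ over the observation-action histories to write
\begin{equation*}
P_\theta(W_T = 1) = \sum_y P(W_T = 1 \mid y)\, P_\theta(y).
\end{equation*}
Next, differentiate in $\theta$ and use the $\theta$-independence of $P(W_T = 1 \mid y)$ to get
\begin{equation*}
\nabla_\theta P_\theta(W_T = 1) = \sum_y P(W_T = 1 \mid y)\, \nabla_\theta P_\theta(y).
\end{equation*}
Finally, apply the log-derivative identity $\nabla_\theta P_\theta(y) = P_\theta(y)\, \nabla_\theta \log P_\theta(y)$ to reach the claimed expression~\eqref{eq:success-gradient}.

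The step that requires the most care, and which I view as the main obstacle, is the justification that $P(W_T=1 \mid y)$ is genuinely $\theta$-free. The argument parallels the one used in the proof of the preceding proposition for $P(Z_T = 1 \mid y)$: Bayes' rule turns this quantity into a ratio whose numerator and denominator are both computable via the observable-operator matrix products of Proposition~\ref{prop:observation-action-probability} (namely using~\eqref{eq:matrix_operation} and~\eqref{eq:matrix_operation_sT}), and these matrix products depend only on $\matT^a$, $\matO^a$ and $\mu_0$. The policy $\pi_\theta$ enters only through $P_\theta(y)$, so conditioning on the actions in $y$ strips away any $\theta$-dependence from $P(W_T=1 \mid y)$. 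Once this is established, the remaining manipulations are the standard score-function (REINFORCE) trick, and no additional machinery is needed.
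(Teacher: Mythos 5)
Your proposal matches the paper's proof exactly: the same law-of-total-probability decomposition $P_\theta(W_T=1)=\sum_y P(W_T=1\mid y)P_\theta(y)$, the same key observation that $P(W_T=1\mid y)$ is $\theta$-free because the actions are part of $y$, and the same log-derivative trick to finish. Your extra justification of the $\theta$-independence via the observable-operator computations is a welcome elaboration of what the paper states in one sentence, but it is not a different route.
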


\begin{proof}
By the law of total probability
\begin{equation}
P_\theta(W_T = 1) = \sum_y  P(W_T = 1|y) P_\theta(y).
\end{equation}
Note that $P(W_T = 1|y)$ does not depend on $\theta$ since all actions are given in observations $y$. Then the gradient of $P(W_T = 1|y)$ is 
\begin{equation*}
\nabla_\theta P_\theta(W_T = 1) = \sum_y  P(W_T = 1|y) \nabla_\theta P_\theta(y).
\end{equation*}
By using the logarithm trick, we can obtain $\eqref{eq:success-gradient}$. 
\end{proof}

According to all the propositions and theorems above, we can calculate the gradient $\nabla_\theta H(Z_T|Y;\theta)$ and $\nabla_\theta P_\theta(W_T = 1)$. With the gradient computation, a gradient descent algorithm can be employed to solve a locally optimal solution for Problem~\ref{problem:formal}.


It is noted that, though $\mathcal{O}^T$ and $\mathcal{A}^T$ are finite sets of observations and actions, They are combinatorial and may be too large to enumerate. To mitigate this issue, we can employ sample approximations to estimate $\nabla_\theta H(Z_T|Y;\theta)$ and $\nabla_\theta P_\theta(Z_T = 1)$: 
Given $M$ sequences of observations $\{y_1, \dots, y_M\}$, we can approximate $ H(Z_T|Y;\theta)$ by
\begin{equation}
\hat{\nabla}_\theta H (Z_T |Y;\theta) =  \frac{1}{M} \sum_{k = 1}^M H(Z_T|Y= y_k)\nabla_\theta \log P_\theta(y_k)
\end{equation}
for $k = 1, \dots, M$. And
\begin{equation}
\hat{\nabla}_\theta P_\theta(Z_T = 1) = \frac{1}{M} \sum_y P(Z_T = 1|y_k) \nabla_\theta \log P_\theta(y_k). 
\end{equation}

\begin{figure}[t!]
\centering
\begin{tikzpicture}[->,>=stealth',shorten >=1pt,auto,node distance=2.5cm, scale=0.8,transform shape]
    \node[state] (0) {\Large $0$};
    \node[state] (1) [right=1.5cm of 0] {\Large $1$};
    \node[state] (2) [below=2.5cm of 1] {\Large $2$};
    \node[state] (3) [right=2.5cm of 1] {\Large $3$};
    \node[state] (4) [right=2.5cm of 2] {\Large $4$};
    \node[state] (5) [right=1.5cm of 3] {\Large $5$};
    
    \path 	
        (0) edge node {$a, b :0.5$} (1)
        (0) edge node [left] {$a, b :0.5$} (2)
        (0) edge [loop left] node{$c$} (0)
        (1) edge [bend left] node {$a, b: 0.5$} (3)
        (1) edge node [right, pos = 0.75] {$a, b: 0.5$} (4)
        (1) edge [loop above] node{$c$} (1)
        (2) edge node [pos=0.2] {$a, b: 0.5$} (3)
        (2) edge [bend right] node [below] {$a, b: 0.5$} (4)
        (2) edge [loop below] node{$c$} (2)
        (3) edge [bend left] node {$a$} (5)
        (3) edge node [above] {$b:0.9$} (1)
        (3) edge [bend left] node [right, pos=0.15] {$a:0.1$} (2)
        (3) edge [loop above] node{$c$} (3)
        (4) edge node {$a$} (5)
        (4) edge [bend left] node [left, pos=0.85] {$b:0.1$} (1)
        (4) edge node {$a:0.9$} (2)
        (4) edge [loop below] node{$c$} (4)
        (5) edge node [align=center] {$a: 0.9$ \\ $b: 0.1$} (3)
        (5) edge [bend left] node [align=center] {$a: 0.1$ \\ $b: 0.9$} (4)
        (5) edge [loop right] node {$c$} (5);
\end{tikzpicture}
\caption{The graph represents the transition of the \ac{pomdp}. The arrows labeled with \(a\), \(b\), or \(c\) represent deterministic actions, while the arrows labeled with both actions and probabilities indicate stochastic actions that lead to a particular state with a given probability.}
\label{fig:graph_example}
\end{figure}
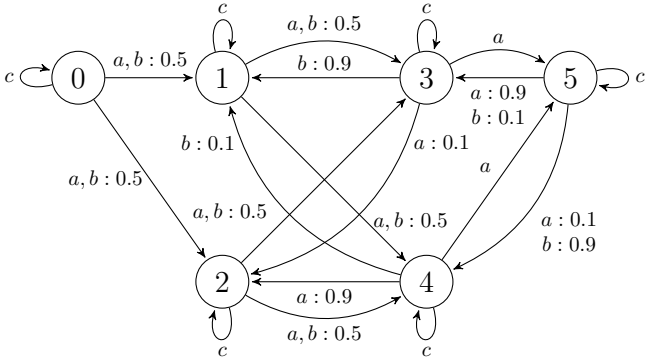

\section{EXPERIMENTAL EVALUATION}
\label{sec:simulation}

We illustrate the effectiveness of the proposed joint control and active perception algorithm using experiments. A UAV (perception agent) and a moving ground robot operate within the same environment. The dynamic models of both the UAV and the ground robot are represented using the same graph-like \ac{mdp}, shown in Fig.~\ref{fig:graph_example}. It consists of six nodes $\{0, 1, 2, 3, 4, 5\}$ and three actions $\{a, b, c\}$. That, the UAV can be at one of the six nodes and take one of the three actions. So is the round robot. The two agents move concurrently.

There are two types of ground robots—a nominal robot and an adversarial robot—each with distinct goal states. The goal states for the nominal robot are $\{3, 4\}$, while the adversary aims for node $\{4\}$. Once the ground robot reaches its goal states, it stay there and continuously collect rewards. Therefore, the optimal policy for the nominal robot is more random: It can go through either node $1$ or $2$ to reach node $3$ or $4$ and stay there. However, for the adversary robot, it will directly move to node $4$ by passing either node $1$ or $2$. We solve two optimal policies, for the nominal robot and the adversarial robot's having different reward functions.

The state of the \ac{pomdp} is represented as $(s_g, s_u, \text{ty})$, where $s_g \in \{0, 1, 2, 3, 4, 5\}$ denotes the ground robot’s state, $s_u \in \{0, 1, 2, 3, 4, 5\}$ represents the UAV’s state, and $\text{ty}$ indicates the robot type. The state space of the \ac{pomdp} is denoted as $\mathcal{S}$, and the action space, $\{a, b, c\}$, corresponds to the UAV’s actions since the ground robot's movements are not controllable and pre-defined with two defaulty policies. The \ac{pomdp} transition dynamics are derived from the independent transitions of both the UAV and the ground robots given the policy of the nominal robot and the policy of the adversarial robot.  

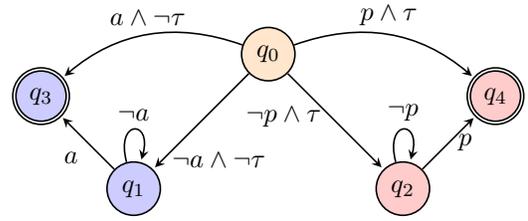
\begin{figure}[t!]
\centering
\begin{tikzpicture}[->, >=stealth, shorten >=1pt, auto, node distance=1cm, semithick]
\node[draw, circle, fill=orange!20] (0) {$q_0$};
\node[draw, circle, below left= 1.8cm of 0,fill=  blue!20] (1) {$q_1$};
\node[draw, circle, below right= 1.8cm of 0,fill=red!20] (2) {$q_2$};
\node[draw, circle, above left= of 1,fill=blue!20,accepting] (3) {$q_3$};
\node[draw, circle, above right= of 2,fill=red!20, accepting] (4) {$q_4$};
\path[->] (0) edge node [right, pos = 0.9] {$\neg a \land \neg \tau$} (1)
(0) edge [bend right] node [above] {$a \land \neg \tau$} (3)
(0) edge [bend left] node [above]  {$p \land \tau$} (4)
(1) edge [loop above] node {$\neg a$} (1)
(1) edge node {$a$} (3)
(0) edge node [left, pos = 0.45] {$\neg p \land \tau$} (2)
(2) edge [loop above] node {$\neg p$} (2)
(2) edge node [right] {$p$} (4);
\end{tikzpicture}
\caption{The \ac{dfa} for the task. Self-loops with label $\varnothing$ are omitted. Double circle nodes represent accepting states.}
\label{fig:DFA}
\end{figure}

The UAV is equipped with a camera that can detect the ground robot’s state when they are on the same node, with a detection probability $0.9$. The UAV receives a null observation ($n$) with probability $0.1$ when the detection fails. The observation space of the \ac{pomdp} includes all possible ground robot states along with the null observation. There is no stationary sensor in the environment.
Initially, the ground robot starts at node $0$, while the UAV starts at node $5$. 

The UAV is given the following task specification:
\begin{itemize}
\item[1.] If the ground robot is the nominal robot, then the UAV perform surveillance task $\varphi_0 \coloneqq \Eventually a$, where $a$ is evaluated true if the UAV reaches the state $2$.
\item[2.] If the ground robot is the adversary, then the UAV performs interdiction task $\varphi_1= \Eventually p$ , where $p$ is evaluated true if the UAV is on the same node as the ground robot (The UAV eventually interdicts the adversarial robot). 
\end{itemize}
The task can be represented by the following \ac{ltlf} formula
$$(\neg \tau \rightarrow \varphi_0   \land   \tau \rightarrow \varphi_1),$$
where $\tau$ is an atomic proposition that indicates the type of ground robots. $\tau$ is evaluated true if the ground robot is an adversary. Fig.~\ref{fig:DFA} shows the corresponding \ac{dfa} of the \ac{ltlf} formula. The formula is evaluated over a finite mission time $T = 5$, based on the semantics of \ac{ltlf}. This \ac{dfa} is not the minimal one, because we would like to introduce the following inference task using the same \ac{dfa}. 

The secret to be inferred is the type of the agent. In the \ac{dfa} in Fig.~\ref{fig:DFA}, these blue states are the automata states that can be visited only if the ground robot is the nominal one; and the red states are the automata states that can be visited only if the ground robot is adversarial. Thus, $F_{sec} = \{q_2,q_4\}$ and $F = \{q_3, q_4\}$. If the last automaton state is in $F_{sec}$, then the trajectory is generated by interacting with an adversarial robot. If the last automaton state is in $F$, then the trajectory satisfies the task, regardless if the robot is nominal or adversarial.


We employ a \emph{finite-state, observation-based} perception policy $\pi$ with deterministic transitions, defined as a tuple $\pi \coloneqq \langle \mathbb{X}, \mathbb{O}, \mathcal{A}, \mathbb{T}, \psi, x_0 \rangle $ where \begin{inparaenum}
\item $ \mathbb{X}$ is a set of memory states.
\item $\mathbb{O}, \mathcal{A}$ are a set of inputs and a set of outputs, respectively.
\item $\mathbb{T}: \mathbb{X} \times \mathbb{O} \rightarrow \mathbb{X}$ is a deterministic transition function that maps a state and an input, $(x,o)$, to a next state, $\mathbb{T}(x,o)$. 
\item $\psi:\mathbb{X} \rightarrow \dist{\mathcal{A}} $ is a probabilistic output function.
\item $x_0$ is the initial state.
\end{inparaenum}
Specifically, given an integer $K \ge 0$, the state set $\mathbb{X}=\{\mathbb{O}^{\le K}\}$ are the set of observations with length $\le K$.  For each $x\in \mathbb{X}$, $\delta(x, o) = x'$ is defined such that $x' =  \mathsf{suffix}^{= K}(x\cdot o)$ \footnote{$\mathsf{suffix}^{= K}(w)$ is the last $K$ symbols of string $w$ if $|w|\ge K$ or $w$ itself otherwise.} is the last (up to) $ K$  observations after appending the observation $o$ to  $x$. The probabilistic output function is parameterized as, 
$\psi_\theta(a|x) =  \frac{\exp(\theta_{x,a})}{\sum_{a' \in \mathcal{A}}\exp(\theta_{x, a'})},
$ where  $\theta \in \reals^{
|\mathbb{X} \times \mathcal{A}|}$ is the policy parameter vector. Given observations $o_{0:t}$, the policy $\pi_\theta(a| o_{0:t}) =  \psi_\theta( a| \delta(x_0, o_{0:t}))$.
The softmax policy is differentiable and complete. 
In the experiments, we set the length of memory $K=2$.

\begin{figure}[t]
\centering
\includegraphics[width=\linewidth]{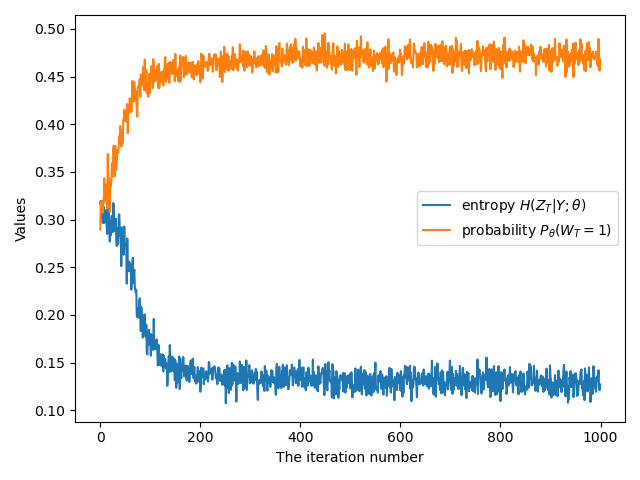}
\caption{The convergence results of the policy gradient method when goal states of different types are overlapping.}
\label{fig:opt_results}
\end{figure}

The prior distribution $\mu_0$ is set to be $0.5$ for each type of agent. Figure~\ref{fig:opt_results} illustrates the convergence of the policy gradient \footnote{We sample $M = 1000$ trajectories and set the horizon $T = 5$ for each iteration. The fixed step size of the gradient descent algorithm is set to be $0.5$. We run $N = 1000$ iterations on the 12th Gen Intel(R) Core(TM) i7-12700, the average time consumed for one iteration is $4.7$ seconds.}. We denote the policy parameter of optimal policy as $\theta^\star$.
When the algorithm converges, the conditional entropy $ H(Z_T|Y;\theta^\star)$ 
approaches approximately $0.127$. This indicates that the observations provide substantial information about the robot's type on average. And the probability of completing the task $P_{\theta^\star}(W_T = 1)$ reaches approximately $0.468$. 
Because the UAV struggles to identify the type of ground robot when it could be either of the two types, task completion becomes more challenging. However, if the ground robot is set to be only one of the types, the probability of completing the task will increase under the same policy. We denote the prior distribution that the probability of being the nominal robot is $1$ as $\mu_0^0$ and the prior distribution that the probability of being the adversary is $1$ as $\mu_0^1$. The probability of completing the task under the prior  distribution $\mu_0^0$ is $P_{\theta^\star}(W_T = 1; \mu_0^0) = 0.667$. Similarly, if the ground robot is set to be only an adversary, $P_{\theta^\star}(W_T = 1; \mu_0^1)= 0.852$. This improvement occurs because the UAV can determine more easily  the robot’s type and select the appropriate actions when there is no ambiguity in classification. Note that the above performance is evaluated using a policy optimized for a different prior (originally $\mu_0$ is set to be $0.5$ for each type). If the UAV initially knows with certainty whether the agent is nominal or adversarial, it can compute policies using the corresponding priors $\mu_0^0$ and $\mu_0^1$, leading to higher task completion probabilities. However, these policies may perform poorly if the prior is $\mu_0$.

\begin{figure}[t]
\centering
\includegraphics[width=\linewidth]{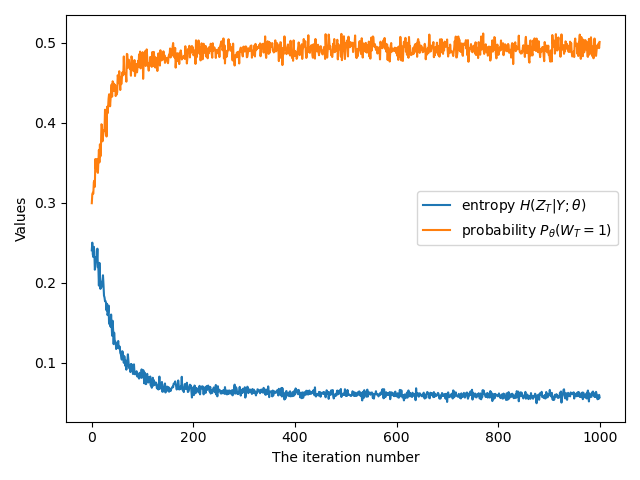}
\caption{The convergence results of the policy gradient method when goal states of different types are non-overlapping.}
\label{fig:opt_results_2}
\end{figure}
In the previous simulation, we set the goal states of two different types of ground robots to overlap ($\{3, 4\}$ and $\{4\}$). In contrast, when the goal states are non-overlapping: the goal state for the nominal robot is  $\{3\}$, while the adversary aims for node $\{4\}$.
We obtain the results in Fig.~\ref{fig:opt_results_2}. As the algorithm converges, the conditional entropy $ H(Z_T|Y;\theta) $ approaches approximately $0.056$, indicating that the observations provide more information about the robot's type than the previous results. Additionally, the probability of task completion, $P_{\theta^\star}(W_T = 1)$, reaches approximately $0.501$.  
The decrease in entropy suggests that identifying the type of ground robot becomes easier for the UAV due to the different, non-overlapping goal sets between the two types of robots.
When the identification becomes easier, the probability of task completion increases. This is because the UAV can more effectively determine the type of ground robot and then select appropriate actions to complete the corresponding task. It is noted that the UAV needs not to infer the agent's type and then plan with the identified type. It only needs to maximizing the probability of completing the formula---thanks to the expressiveness of temporal logic.

\section{CONCLUSION AND FUTURE WORK}
In this paper, we designed an algorithm to obtain an optimal joint control and active perception policy that influences both the dynamics and the emission function in a \ac{pomdp} model. By using conditional entropy to quantify uncertainty, we formulated an optimization problem that minimizes the uncertainty about some critical temporal events while maximizing task completion probability in   stochastic, partially observable environments. To solve this problem, we developed a gradient descent algorithm and derived the gradient of the objective function using observable operators, resulting in a gradient form similar to the policy gradient theorem \cite{Sutton1999policy}.  

Future research could explore joint control and active perception under different assumptions for the perception agent, such as scenarios where the agent has imprecise knowledge of the model dynamics. Another promising direction is extending this work to systems with continuous observations, such as camera images and  sensor data, to broaden the applicability of our approach to real-world tasks.


\bibliographystyle{IEEEtranS}
\bibliography{ref}

\end{document}